\newtheorem{lm}{Lemma}[section]
\newtheorem{thm}[lm]{Theorem}
\newtheorem{cor}[lm]{Corollary}
\newtheorem{ex}[lm]{Example}
\newcounter{senumi}[section]
\newcounter{senumip}[section]
\newcounter{temp}[section]
\def\thesenumi{\thesection.\arabic{senumip}}
\def\p@senumip\thesenumip{\thesenumi}
    {\begin{list}%
        {(\thesenumi)}%
        {\usecounter{senumip}}
        \setcounter{senumip}{\value{temp}}
    }%
    {\setcounter{temp}{\value{senumip}}
     \end{list}}
\newcounter{penumi}[section]
\newcounter{ptemp}[section]
\newcounter{ppenumi}[section]
\newcounter{pptemp}[section]
\def\theppenumi{\theptemp.\arabic{ppenumi}}
    {\begin{list}%
        {(\theppenumi)}%
        {\usecounter{ppenumi}\setlength{\rightmargin}{\leftmargin}}
        \setcounter{ppenumi}{\value{pptemp}}
    }%
    {\setcounter{pptemp}{\value{ppenumi}}
     \end{list}}
\newenvironment{tenumerate}{\begin{enumerate}}{\end{enumerate}}
\newcommand{\m}[1]{{\uppercase {\mathbf{#1}}}}
\newcommand{\ceqv}[1]{\ensuremath{\operatorname{\textsc{\upshape{Ceqv}}}
                                \ifthenelse{\equal{#1}{}}{}{\!\left( { #1} \right)}}}
\newcommand{\csat}[1]{\ensuremath{\operatorname{\textsc{\upshape{Csat}}}
                                \ifthenelse{\equal{#1}{}}{}{\!\left( { #1} \right)}}}
\newcommand{\Csat}[1]{\ensuremath{\operatorname{\textsc{\upshape{Csat}}}
                                \ifthenelse{\equal{#1}{}}{}{\!\left( { #1} \right)}}}
\newcommand{\Ceqv}[1]{\ensuremath{\operatorname{\textsc{\upshape{Ceqv}}}
                                \ifthenelse{\equal{#1}{}}{}{\!\left( { #1} \right)}}}
\newcommand{\mcsat}[1]{\ensuremath{\operatorname{\textsc{\upshape{MCsat}}}
                                \ifthenelse{\equal{#1}{}}{}{\!\left( {\m #1} \right)}}}
\newcommand{\SCsat}[1]{\ensuremath{\operatorname{\textsc{\upshape{SCsat}}}
                                \ifthenelse{\equal{#1}{}}{}{\!\left( {\m #1} \right)}}}
\newcommand{\csp}[1]{\ensuremath{\operatorname{\textsc{\upshape{CSP}}}
                                \ifthenelse{\equal{#1}{}}{}{\!\left( {#1} \right)}}}
\newcommand{\npc}{\textsf{NP}-complete\xspace}
\newcommand{\np}{\textsf{NP}\xspace}
\newcommand{\ptime}{\textsf{P}\xspace}
\newcommand{\set}[1]{{\left\{ {#1} \right\} }}
\renewcommand{\leq}{\leqslant}
\renewcommand{\geq}{\geqslant}
\renewcommand{\mapsto}{\longmapsto}
\newcommand {\bc}[1]{{\overline {#1}} }
\newcommand{\con}[1]{{\sf Con\:\m{#1}}}
\newcommand{\cn}[1]{{\sf Con\:\m{#1}}}
\newcommand{\po}[1]{{\mathbf {#1}}}
\newcommand{\te}[1]{{\mathbf {#1}}}
\renewcommand{\o}[1]{\overline {#1}}
\newcounter{ttable}
\newcommand{\comm}[2]{\left[ #1 , #2 \right]}
\newcommand{\com}[2]{\left[ #1 , #2 \right]}
\newcommand{\commm}[2]{\left[ #1 , \ldots, #2 \right]}
\newcommand{\map}{\longrightarrow}
\newcommand{\congruent}[1]{\stackrel{#1}{\equiv}}
\author{Piotr Kawa\l{}ek}
\address{Jagiellonian University, Faculty of Mathematics and Computer Science, Department of Theoretical Computer Science ul.~Prof.~S.~\L{}ojasiewicza~6,3`0-348,~Krak\'ow,~Poland }
\email{piotr.kawalek@doctoral.tcs.uj.edu.pl}{}
\author{Jacek Krzaczkowski}
\address{Maria Curie-Sklodowska University, Faculty of Mathematics, Physics and Computer Science, Department of Computer Science ul.~Akademicka~9,~20-033,~Lublin,~Poland }
\email{krzacz@poczta.umcs.lublin.pl}
\keywords{circuit satisfiability, solving equations, supernilpotent algebras, satisfiability in groups}
\begin{document}

\thanks{The first author was partially supported by Polish NCN Grant \# 2014/14/A/ST6/00138.}
\begin{abstract}
 In this paper two algorithms solving circuit satisfiability problem over supernilpotent algebras are presented. The first one is deterministic and is faster than fastest previous algorithm presented in \cite{aichinger}. The second one is probabilistic with linear time complexity. Application of the former algorithm to finite groups provides time complexity that is usually lower than in previously best \cite{foldvari2018} and application of the latter leads to corollary, that circuit satisfiability problem for group $\m G$ is either tractable in probabilistic linear time if $\m G$ is nilpotent or is \npc if $\m G$ fails to be nilpotent. The results are obtained, by translating  equations between polynomials over supernilpotent algebras to bounded degree polynomial equations over finite fields.
\end{abstract}

\title{Even faster algorithms for CSAT over~supernilpotent algebras}
\maketitle

\section{Introduction}
Solving equations is one of the most popular mathematical problems with applications in many areas. We are interested in computational complexity of equations satisfiability problem for fixed finite algebra. In the original definition of the problem for a given equation of polynomials over fixed algebra we ask if it has solution or not.  There is number of papers in which authors tried to characterize algebras for which this problem is tractable in polynomial time and for which it is hard in terms of some well established complexity assumptions (i. e. \ptime $\neq$ \np). Most of authors consider some well known structures with fixed language like groups \cite{goldman-russell}, \cite{horvath-szabo:groups}, \cite{horvath:positive}, \cite{horvath:metabelian}, \cite{foldvari2018}, \cite{foldvari-horvath}, rings \cite{horvath:positive}, \cite{karolyi-szabo} or lattices \cite{schwarz}, however there is some number of papers considering more general cases e.g. \cite{gorazd-krzacz:2el}, \cite{gorazd-krzacz:preprimal}, \cite{aichinger}. A new look on the problem was proposed in
 \cite{ik:lics18}. This paper was the first systematic study on solving equations in quite general setting. The authors of \cite{ik:lics18} decided to allow more compact representation of polynomials on the input of the problem, so to represent them as multi-valued circuits. 
 It leads to the following definition of the problem:
 \begin{description}
 \item[\csat{\m A}] given a circuit over $\m A$ with two output gates $g_1$, $g_2$ is there a valuation  of  input  gates $\o x$ that  gives  the  same  output  on $g_1$, $g_2$,  i.e. $g_1(\o x) =g_2(\o x)$.
 \end{description}
 Such the definition  gives us, that computational complexity of \csat{\m A} depends only on the polynomial clone of $\m A$  and in consequence can be characterized in terms of algebraic properties of $\m A$.
 Several articles considering this new approach to solving equations have appeared e.g. \cite{komp:2018}, \cite{ikk:mfcs18}, \cite{aichinger}, \cite{kkk:2nil}, \cite{komp:cc}, \cite{ikk:lics20}. In this paper we will present the results in terms of $\csat{}$, however for clarity we mention, that all the algorithms and upper bounds presented here apply also to the original definition of the problem as polynomials can be represented by circuits expanding size of the representation only by constant factor. 

Algebras generating  congruence modular variety are the wide class of algebras containing among others many popular algebraic structures like groups, rings and lattices.  We will call this class of algebras $\textsc{CM}$  for short. Analyzing partial characterization of computational complexity of \csat{} for algebras from $\textsc{CM}$ presented in \cite{ik:lics18} and also results of \cite{ikk:mfcs18}, \cite{komp:cc} and \cite{ikk:lics20}, we can see the truly rich world in which one can find problems of different complexities: \npc problems, problems contained in \ptime and those, that are natural candidates for \np-intermediate problems.  Surprisingly, there are known only three essentially different polynomial time algorithms solving \csat{} over algebras from congruence modular varieties.  Two of them are the  black-box algorithms i.e. the algorithms which treat circuits as a black-box and try to find the solution by checking not too big set of potential solutions (so called hitting set). One of them originally was proposed for nilpotent groups \cite{goldman-russell} and the second one works for distributive lattices (\cite{schwarz}). The third of the algorithms mentioned above solves \csat{} by inspecting some kind of normal form of a given circuit but it seems that the usefulness of such kind of  algorithm is limited to so called $2$-step supernilpotent algebras \cite{ikk:mfcs18}, \cite{ikk:lics20}.

In this paper we consider supernilpotent algebras from $\textsc{CM}$ which are natural generalization of nilpotent groups (among all groups only those nilpotent ones induce tractable problems, assuming $\ptime \neq \np$). Every such supernilpotent algebra $\m A$ decomposes into a direct product of supernilpotent algebras of prime power order. That is why we can reduce problem of solving equations over $\m A$ to fixed number (at most $\log |A|$) of satisfiability problems over supernilpotent algebras, but this time of prime power order. This Turing reduction can be performed in linear time and thus we will only be looking for an algorithm for solving equations over supernilpotent algebras of prime power order.

We will slightly modify algorithm that was applied in the group setting. In this algorithm we check potential solutions in which at most $d$ variables are assigned to non-zero value. It was introduced by Goldmann and Russell in \cite{goldman-russell} for nilpotent groups and its correctness was reproved by Horvath  in \cite{horvath:positive}. In both cases it was shown that considered algorithm works in polynomial time but the degree of the polynomial came from application of Ramsey Theory and was really huge. Later it was independently shown in \cite{komp:2018} and \cite{ik:lics18} that essentially the same algorithm works for supernilpotent algebras from congruence modular variety in polynomial time  with the same huge degree of the polynomial. This results was improved by Aichinger in \cite{aichinger}. In his paper the degree of the polynomial describing complexity of \csat{\m A} was bounded by $d=|A|^{\log_2{|A|}+\log_2{m}+1}$, where $m$ is a maximal arity of basic operation of $\m A$.  Using similar tools as Aichinger and some new ideas we show the following. 

\begin{thm}\label{thm-determ-alg}
Let $\m A$ be a supernilpotent algebra of prime power order $q^h$ from congruence modular variety. Then there exists black-box algorithm solving \csat{\m A} in time $O(n^d k)$, where $d=|A|^{\log_q{m}+1}$, $m$ is a maximal arity of basic operations of $\m A$ and $k$ is the input size.
\end{thm}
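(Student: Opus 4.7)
The plan is to refine the Goldmann--Russell--Aichinger strategy: iterate over all candidate assignments of small Hamming weight, and prove correctness via a finite-field polynomial encoding that guarantees a sparse satisfying assignment exists whenever any satisfying assignment does.

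The algorithm itself is simple. Fix a distinguished element $0 \in A$. For every subset $S \subseteq \{1,\ldots,n\}$ with $|S| \leq d$ and every assignment $\sigma : S \to A$ (with all other variables set to $0$), evaluate the circuit at the resulting $\o x$ and test whether $g_1(\o x) = g_2(\o x)$; return YES upon any success, otherwise NO. The number of candidate assignments is $\sum_{i \leq d} \binom{n}{i} |A|^i = O(n^d)$, each evaluable in $O(k)$ time, yielding total running time $O(n^d k)$. Since only circuit evaluations are performed, the algorithm is black-box as required.

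Correctness reduces to showing that if a satisfying assignment exists then one of Hamming weight at most $d$ exists. Using the decomposition theory for supernilpotent algebras of prime power order in a congruence modular variety, I would fix a bijection $\iota : A \to \mathbb{F}_q^h$ sending $0_A$ to the zero vector, so the $n$ input variables over $A$ become $nh$ field variables. Under this coordinatization each basic $m$-ary operation of $\m A$ is encoded by a tuple of $h$ polynomials in $mh$ field variables of bounded total degree, and propagating through the circuit gives an encoding of each output by $h$ polynomials over $\mathbb{F}_q$ in the $nh$ input coordinates (reduced modulo $x^q - x$ in each variable). A standard sparse-witness lemma --- any polynomial in $\mathbb{F}_q[x_1,\ldots,x_N]$ reduced modulo $x^q - x$, of total degree at most $d$, that is not identically equal to its value at $\o 0$, differs from that value at some input of Hamming weight at most $d$ --- applied to the indicator polynomial of the equality $g_1(\o x) = g_2(\o x)$ then delivers the required sparse witness.

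The technical heart, and the main obstacle, is the degree estimate: the relevant indicator polynomial over $\mathbb{F}_q$ must have total degree at most $d = |A|^{\log_q m + 1} = (qm)^h$. A naive composition analysis blows up exponentially in the circuit depth, so the proof must exploit supernilpotency tightly. Refining Aichinger's approach, I would induct on the $h$ coordinates of $\mathbb{F}_q^h \cong A$: at each coordinate level one uses a commutator-based normal form for polynomial operations of a supernilpotent algebra to express the $j$th coordinate of a composed operation as a bounded-degree polynomial in the earlier coordinates and in the inputs, then collapses degrees $\geq q$ via $x^q = x$ before proceeding to the next coordinate. Each such step contributes a multiplicative factor of $qm$ to the degree, rather than the $q \cdot |A|$ factor extracted by the coarser base-$2$ accounting of Aichinger's analysis; the careful bookkeeping needed to replace Aichinger's exponent $\log_2 |A| + \log_2 m + 1$ by $\log_q m + 1$ is what constitutes the real content of the proof.
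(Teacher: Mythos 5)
Your proposal is correct and follows essentially the same route as the paper: the same search over assignments with at most $d$ non-zero variables, the same sparse-witness argument via the monomial with the largest variable support in the reduced $\mathbb{F}_q$-representation of the $\{0,1\}$-valued indicator polynomial, and the same coordinate-by-coordinate degree analysis through the wreath-product decomposition (the paper's Lemmas \ref{sys-pol} and \ref{lm-eqA-to-eqF}). The degree estimate you defer is indeed the technical heart, and your sketch --- a multiplicative factor of $qm$ per field coordinate, yielding $(qm)^h = |A|^{\log_q m + 1}$ --- is exactly how the paper carries it out.
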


Proof of this theorem can be found in Section \ref{sec-determ}. Note, that after applying Theorem \ref{thm-determ-alg} in the realm of nilpotent groups of prime power order $q^h$ we obtain that $d=|G|^{\log_q{2}+1}$. We note here that in \cite{foldvari2018} A. Földvári, using some group specific tools, showed the different algorithm  for original equation satisfiability problem of  time complexity $O(n^d)$, where $d=\frac{1}{ 2}|G|^2\cdot \log{|G|}$ (here $n$ denotes the input size).  So in most cases our algorithm improves this result too (especially when prime $q$ is huge).

It turns out that switching from deterministic computational model to probabilistic one we obtain a great improvement. It is shown in the second main result of this paper, which states the following

\begin{thm}\label{thm-rand}
Let $\m A$ be a supernilpotent algebra of prime power order from congruence modular variety.  Then there exists linear time Monte Carlo algorithm solving \csat{\m A}.
\end{thm}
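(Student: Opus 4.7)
The plan is to strengthen Theorem~\ref{thm-determ-alg} by replacing the deterministic enumeration over a hitting set with randomized sampling. As a first step I would reuse the reduction underlying Theorem~\ref{thm-determ-alg}: in linear time, decompose $\m A$ into supernilpotent factors of prime power order $q^h$, and translate each factor's circuit equation into an equivalent polynomial equation $P(\bar x) = 0$ over $\mathbb{F}_q$ of degree at most the constant $d$ from that theorem. The polynomial $P$ is kept in circuit form so that it can be evaluated at any assignment in $O(k)$ time.

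The key new ingredient is a \emph{density dichotomy}: for the polynomial $P$ arising from this translation, either $P$ has no zero in $\mathbb{F}_q^n$, or the zero set of $P$ has density bounded below by some positive constant $\delta > 0$ depending only on $|A|$ (and $d$). Granted the dichotomy, the Monte Carlo algorithm is immediate: sample a constant number $O(1/\delta)$ of uniformly random assignments $\bar x \in A^n$ and check, in $O(k)$ time for each, whether the input circuit equation is satisfied; output \emph{yes} if any sample works and \emph{no} otherwise. On unsatisfiable inputs the algorithm is always correct; on satisfiable ones the false-\emph{no} probability is at most $(1-\delta)^{O(1/\delta)}$, a constant strictly less than $1$ that can be driven to any desired value by additional samples.

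The main obstacle is establishing the density dichotomy. The hitting-set result behind Theorem~\ref{thm-determ-alg} produces only \emph{some} satisfying assignment of support at most $d$, so to obtain a constant-density lower bound we must show that solutions come in large clumps rather than being isolated. Here I would lean on the commutator-theoretic structure of supernilpotent algebras: a single solution $\bar x^*$ should be variable, along directions prescribed by the supernilpotent decomposition of $\m A$, into a whole coset-like family of solutions whose density is governed only by $|A|$ and $d$. A natural target is to show that the zero set is either empty or contains a translate of a subgroup/absorbing-set of index at most $|A|^d$ in $A^n$, giving $\delta\geq |A|^{-d}$. Once this all-or-nothing structural fact is in hand, a single pass of Schwartz--Zippel style random evaluation suffices to close the proof in linear time.
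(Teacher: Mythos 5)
Your overall architecture matches the paper's: reduce \csat{\m A} via the wreath-product decomposition to a single bounded-degree polynomial equation over $\m F_q$ (Lemma \ref{lm-eqA-to-eqF}), establish that a nonempty solution set must have density at least a constant depending only on $\m A$, and then sample uniformly at random. You have also correctly isolated the one statement that carries all the weight, namely the density dichotomy. The problem is that this dichotomy \emph{is} the theorem at this point, and the route you sketch for proving it does not work. Your target claim --- that a nonempty solution set of a degree-$d$ polynomial over $\m F_q$ contains a translate of a subgroup of index at most $|A|^d$ (equivalently, an affine subspace of bounded codimension) --- is false. Take $\po f(x_1,\ldots,x_n)=x_1x_2+x_3x_4+\cdots+x_{n-1}x_n$: it has degree $2$, its zero set has density roughly $1/q$, yet any affine subspace contained in that zero set is (a translate of) a totally isotropic subspace of a nondegenerate quadratic form and hence has codimension at least $n/2$, which is unbounded. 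Since Example \ref{ex-nil-to-poly} of the paper shows that essentially arbitrary bounded-degree polynomial equations over $\m F_q$ arise from supernilpotent algebras, restricting attention to polynomials produced by the reduction does not rescue the coset picture, and the appeal to ``commutator-theoretic structure'' to make solutions clump into cosets is not a proof.

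The paper instead proves the density bound as a self-contained statement about arbitrary bounded-degree polynomials over finite fields (Lemma \ref{lm-zippel}): for every $y$ in the image of an $n$-ary degree-$d$ polynomial $\po f$ over $\m F_q$, one has $|\po f^{-1}(y)|\geq q^{n-d-q\log_2 q}$. The proof builds a chain $\po f=\po f_0,\po f_1,\ldots,\po f_l$ of polynomials of decreasing arity by substituting one variable at a time: when $1<|\po f_i^{-1}(y)|<q^q$ a constant substitution at least halves the solution count, and when $|\po f_i^{-1}(y)|\geq q^q$ one finds $q$ linearly independent solutions and substitutes a linear form, cutting the count by a factor of $q$; degrees never increase. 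The chain stops at a univariate polynomial or at a unique solution, and in the latter case the indicator polynomial $\prod_{i=1}^{n-l}(1-x_i^{q-1})$, whose degree $(q-1)(n-l)$ is forced by uniqueness of representation modulo $x^q=x$, yields $\deg\po f\geq n-l$. Counting the two kinds of steps gives $l\leq q\log_2 q+\log_q|\po f^{-1}(y)|$ and hence the bound. Some argument of this kind (or another quantitative fact about value distribution of low-degree polynomials over $\m F_q$) is what your proposal is missing; without it the claimed constant $\delta$ has no justification.
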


The surprising corollary we get when we apply Theorem \ref{thm-rand} to finite groups and use results from \cite{goldman-russell} and \cite{horvath-szabo:polsatstar} 
\begin{cor}
Let $\m G$ be a finite group. Then \csat{\m G}
\begin{itemize}
 \item can be solved by linear time Monte Carlo algorithm if $\m G$ is nilpotent,
  \item is \npc otherwise.
\end{itemize}
\end{cor}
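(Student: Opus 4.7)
The plan is to assemble Theorem~\ref{thm-rand} with the known dichotomy for the classical equation satisfiability problem over finite groups, so the corollary is essentially a bookkeeping exercise.

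First I would handle the tractable side. Every group generates a congruence permutable (hence congruence modular) variety via the Mal'cev term $xy^{-1}z$, so every finite group belongs to $\textsc{CM}$. For finite groups the notions of nilpotence and supernilpotence coincide: a finite nilpotent group is the direct product of its Sylow $p$-subgroups, each of which is a nilpotent group of prime power order, and such groups are well known to be supernilpotent as algebras. Hence, if $\m G$ is nilpotent, Theorem~\ref{thm-rand} applies to each Sylow factor, and composing these algorithms via the linear-time Turing reduction sketched in the introduction yields a linear-time Monte Carlo algorithm for \csat{\m G}.

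For the hard side, when $\m G$ is not nilpotent the classical polynomial equation satisfiability problem over $\m G$ is \npc: Goldmann and Russell~\cite{goldman-russell} established this for non-solvable groups, and Horv\'ath and Szab\'o~\cite{horvath-szabo:polsatstar} extended it to the remaining solvable non-nilpotent case, giving a complete \npc-dichotomy at the polynomial level. Since every polynomial term over $\m G$ can be realized by a circuit of linear size (as noted in the introduction), any \npc-hardness proof for the polynomial problem transfers to \csat{\m G}. Membership of \csat{\m G} in \np is immediate: guess a valuation of the input gates and evaluate the two output gates.

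The only real content beyond citation chasing is the claim that for finite groups supernilpotence coincides with nilpotence, which is needed to justify invoking Theorem~\ref{thm-rand} exactly under the hypothesis stated in the first bullet; everything else is a direct combination of Theorem~\ref{thm-rand} with the known group-theoretic hardness results.
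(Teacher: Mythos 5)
Your tractable case is correct and is exactly what the paper intends: finite groups lie in $\textsc{CM}$ via the Mal'cev term, a finite nilpotent group is the direct product of its Sylow subgroups, each factor is a supernilpotent algebra of prime power order (this is precisely condition (2) of Theorem~\ref{thm-prime-supernil}: nilpotent, prime-power direct factorization, finitely generated term clone), and the linear-time Turing reduction from the introduction stitches the per-factor Monte Carlo algorithms together while keeping a constant success probability.

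The hard case, however, rests on a false intermediate claim. You assert that Goldmann--Russell together with Horv\'ath--Szab\'o give ``a complete \npc-dichotomy at the polynomial level,'' i.e.\ that the \emph{classical} polynomial equation satisfiability problem is \npc for every non-nilpotent finite group. That is not true and not what \cite{horvath-szabo:polsatstar} proves: for several solvable non-nilpotent groups (e.g.\ $S_3$ and various meta-abelian groups, cf.\ \cite{horvath:metabelian}) the plain word-input problem is known to be in \ptime, and the general classification at that level is open. What \cite{horvath-szabo:polsatstar} actually establishes is \npc-hardness of the \emph{extended} equation solvability problem for all non-nilpotent groups, where the input may involve expressions (iterated commutators and the like) that blow up exponentially when expanded into a single group word but have linear-size representations. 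The correct transfer to \csat{\m G} goes through precisely because circuits realize these extended expressions with only linear overhead --- so the hardness you need lives natively at the circuit/extended level, not at the plain polynomial level. Your conclusion is right and the fix is local (replace ``classical polynomial equation satisfiability'' by ``extended equation solvability'' and justify the linear-size circuit encoding of the extended inputs), but as written the premise you reduce from is false.
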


To obtain algorithms mentioned above we study structure of nilpotent algebras of prime power order. Thanks to deep universal algebraic tools developed in \cite{fm} and \cite{wanderwerf} our study does not contain hard to read technical proofs. Nevertheless reader not interested in algebraic details can skip Section \ref{sec-struct}. Reader interested in more systematic and detailed study in this spirit but in more general settings can see \cite{ikk:intervals}.

The main conclusion of Section \ref{sec-struct} is that solving equations over nilpotent algebras of prime power order $q^h$ can be reduced to solving one special equation between polynomials over field $\m F_q$ of bounded degree. Thus, in next sections we do not need the universal algebraic tools and we work with finite fields only.

Our randomized algorithm solving equations over $\m F_q$ of low degree is very simple.  It turned out that all we need to do is randomly draw solutions with an uniform distribution. In such the way, we obtain $c$-correct true-biased algorithm for some constant $c$ depending on the algebra. It works thanks to nice behavior of polynomial over $\m F_q$ of not too high degree.  This behavior is described in the following Lemma.

\begin{lm}\label{lm-zippel}
Let $\po f$ be $n$-ary polynomial of degree $d$ over finite field $\m F_q$. Then, for every $y\in F_q$ such that $|\po f^{-1}(y)|>0$ we have $|\po f^{-1}(y)|\geq q^{n-d -q\log_2{q}}$
\end{lm}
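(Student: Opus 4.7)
The statement reduces to a classical lower bound on the zero sets of low-degree polynomials over $\m F_q$. The first step is to pass from $\po f$ to the polynomial $\po g = \po f - y$, which has the same degree $d$ and satisfies $\po g^{-1}(0) = \po f^{-1}(y)$. By hypothesis $\po g$ has at least one zero, so the goal becomes $|\po g^{-1}(0)| \geq q^{n - d - q\log_2 q}$.

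The second step is to invoke Warning's second theorem: any polynomial in $n$ variables of degree $d < n$ over $\m F_q$ that has at least one zero has in fact at least $q^{n-d}$ zeros in $\m F_q^n$. When $d < n$, this bound is strictly stronger than the claim, since $q^{n-d} \geq q^{n - d - q\log_2 q}$. When $d \geq n$ the claim is trivial: $q^{n - d - q\log_2 q} \leq q^{-q\log_2 q} < 1$ (as $q \geq 2$), while $|\po g^{-1}(0)| \geq 1$ by hypothesis.

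The only real obstacle, if a self-contained treatment is preferred, is Warning's theorem itself. The standard proof uses the fact that $1 - \po g(x)^{q-1}$ is the $\m F_q$-valued indicator of $\po g^{-1}(0)$, combined with the vanishing of $\sum_{a \in \m F_q} a^j$ unless $j$ is a positive multiple of $q-1$, and a careful monomial-by-monomial accounting. The generous slack $q\log_2 q$ in the target bound, compared with the sharp $q^{n-d}$ of Warning, indicates that a cruder counting argument (for instance via random restrictions to affine subspaces of appropriate dimension, combined with a union bound over the at most $q$ possible image values) would also close the proof, with the $q\log_2 q$ term absorbing the combinatorial losses.
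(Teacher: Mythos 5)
Your proof is correct, but it takes a genuinely different route from the paper's. You pass to the zero set of $\po g = \po f - y$ and invoke Warning's second theorem, which for $d < n$ gives the sharp bound $|\po g^{-1}(0)| \geq q^{n-d}$ --- strictly stronger than the claimed $q^{n-d-q\log_2{q}}$ --- and you correctly dispose of the case $d \geq n$ as trivial since then the claimed bound is below $1$. The paper instead gives a self-contained elementary argument: it builds a chain of restrictions $\po f = \po f_0, \po f_1, \ldots, \po f_l$, each obtained by substituting one variable either by a constant (when the fibre has fewer than $q^q$ points, shrinking it by a factor at least $2$) or by an affine combination of the remaining variables (when the fibre has at least $q^q$ points, which guarantees $q$ linearly independent solutions and lets one shrink the fibre by a factor at least $q$); the process stops when the fibre is a singleton or one variable remains, and a comparison with $\prod_i (1-x_i^{q-1})$ then yields $\deg \po f \geq n-l$. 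The $q\log_2{q}$ slack in the exponent is exactly the cost of the at most $q\log_2{q}$ constant-substitution steps. Your approach buys a shorter proof and a sharper constant at the price of importing a classical but nontrivial theorem; the paper's buys self-containment at the price of the slack term, which is harmless for the intended application since it is absorbed into the constant $c$ of the Monte Carlo algorithm. One caution: the proof sketch you give for Warning's theorem (the indicator polynomial $1 - \po g(x)^{q-1}$ together with the vanishing of power sums) is really the proof of the Chevalley--Warning divisibility statement; the $q^{n-d}$ lower bound requires a more delicate argument (e.g., counting solutions along parallel affine subspaces of dimension $d$), so a fully self-contained version of your route would still need that work --- but as a citation-based proof of the lemma as stated, your argument is complete.
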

Note that if degree of polynomial was smaller than the size of the field, then we would just need to apply famous Schwartz–Zippel lemma to get that density of solutions among all possible assignments to variables  is huge. In our case degree of polynomial is bounded by constant depending on $\m A$ and almost always it exceeds the field size we are working with. There are also number of another results, that can be applied here, introduced for polynomial identity checking of $s$-sparse polynomials, but they do not lead to linear time algorithm.

The article is organized as follows. The second section contains some definitions and  background materials. In Section \ref{sec-struct} we present the structure of supernilpotent algebras and show that \csat{} for such algebras can be reduced to solving equations between polynomials of bounded degree over finite field.  The proof of Lemma \ref{lm-zippel} is contained in Section \ref{sec-zippel}.
In Sections \ref{sec-determ} and  \ref{sec-rand} we show deterministic and randomized algorithms solving \csat{} for supernilpotent algebras and prove Theorem \ref{thm-determ-alg} and Theorem \ref{thm-rand}. Finally, Section \ref{sec-conc} contains remarks regarding results contained in this paper and conclusions.
\section{Background material}
In this paper we use the standard notation of universal algebra (see e.g. \cite{burris}). An algebra is for us a structure consisting of  the set called universe and the set of finitary operations on it. Groups and fields are obviously examples of algebras.  All algebras considered in this paper are finite i.e with finite universe and finite set of operations. We usually denote algebras using bold capital letter and its universe by the same but non-bold letter. The language or type of algebra is the set $\mathcal{F}$ of function symbols together with non-negative integers assigned to each member of $\mathcal{F}$. We say that an algebra $\m A=(A,F)$ is of type $\mathcal{F}$ if the set $F$ of its operations is indexed by elements of $\mathcal{F}$ and for $n$-ary function symbol the corresponding operation $f^{\m A}\in F$ is also $n$-ary. We use overlined small letters e.g. $\o x$, $\o a$ to denote tuples of variables or elements of an algebra and the same letters without overline but with subscript to denote elements of tuples e.g. $x_i$, $a_i$.

Now, we will recall some basic notions. Let $\m A$ be an algebra and $\alpha, \beta, \gamma \in \cn A$.
We say that {\em $\alpha$ centralizes $\beta$ modulo $\gamma$}, denoted $C(\alpha, \beta;\gamma)$, if for every $n$ and $n$-ary term $\te t$, every $(a,b) \in \alpha$, and every
$(c_1,d_1),\dots,(c_n,d_n)\in \alpha$ we have
\[
\te t(a,\bc c) \congruent{\gamma} \te t(a,\bc d)  \mbox{\ \  iff \ \  }
\te t(b,\bc c) \congruent{\gamma} \te t(b,\bc d).
\]

If $\alpha$ and $\beta$ are congruence relations on an algebra
$\m A$, then the {\em commutator} of $\alpha$ and $\beta$, denoted
$\com \alpha \beta$, is the least congruence $\gamma$ for which
$C(\alpha,\beta;\gamma)$. Note that for algebras from congruence modular variety defined in this manner commutator is commutative, monotone  and join-distributive. We say that $\alpha$ is {\em abelian over} $\beta$ if $[\alpha,\alpha]\leq\beta$. An algebra $\m A$ is {\em abelian} if $[1_{\m A},1_{\m A}]=0_{\m A}$. Note, that in \textsc{CM} abelian algebras are exactly affine algebras i.e. algebras polynomially equivalent to a module.

For a congruence $\theta$ and $i=1,2,\dots$ we write
\[ \begin{array}{rclcrcl}
\theta^{(1)}&=&\theta \\
\theta^{(i+1)}&=&[\theta,\theta^{(i)}] 
\end{array}
\]

A congruence relation $\theta$ on $\m A$ is called
{\em $k$-step nilpotent}
if $\theta^{(k+1)}=0_A$ and the algebra $\m A$
is {\em nilpotent} if $1_A$ is $k$-step nilpotent for some finite $k$.

Our study is focused on supernilpotency - the strengthening of the nilpotency.
For congruences
$\alpha_1,\ldots,\alpha_k,\beta,\gamma \in \con A$
we say that $\alpha_1,\ldots,\alpha_k$ centralize $\beta$ modulo $\gamma$,
and write $C(\alpha_1,\ldots,\alpha_k,\beta;\gamma)$,
if for every polynomial $\po f$ over $\m A$ and all tuples
$\o a_1 \congruent{\alpha_1} \o b_1, \ldots, \o a_k \congruent{\alpha_k} \o b_k$
and $\o u \congruent{\beta} \o v$
such that
\[
\po f(\o x_1,\ldots, \o x_k, \o u) \congruent{\gamma} \po f(\o x_1,\ldots, \o x_k, \o v)
\]
for all possible choices of
$(\o x_1,\ldots, \o x_k)$ in $\set{\o a_1,\o b_1} \times \ldots \times \set{\o a_k,\o b_k}$
but $(\o b_1,\ldots.\o b_k)$,
we also have
\[
\po f(\o b_1,\ldots, \o b_k, \o u) \congruent{\gamma} \po f(\o b_1,\ldots, \o b_k, \o v).
\]
This notion was introduced by A.~Bulatov \cite{bulatov:supercomm}
and further developed by E.~Aichinger and N.~Mudrinski
\cite{aichinger-mudrinski}.
In particular they have shown that for all $\alpha_1,\ldots,\alpha_k \in \con A$
there is the smallest congruence $\gamma$ with $C(\alpha_1,\ldots,\alpha_k;\gamma)$
called the $k$-ary commutator and denoted by $\commm{\alpha_1}{\alpha_k}$.
Such generalized commutator for algebras from congruence modular varieties. has many nice properties. In particular this commutator is symmetric, monotone, join-distributive and we have
\begin{equation}\label{eq-spercomm}
\comm{\alpha_1} {\commm{\alpha_2}{\alpha_k}} \leq \commm{\alpha_1}{\alpha_k}\leq\commm{\alpha_2}{\alpha_k}
\end{equation}
Generalization of commutator enabled us to define  $k$-supernilpotent algebras as algebras satisfying
$[ \overbrace{1,\ldots,1}^{\text{\scriptsize $k\!+\!1$ times}} ] =0$. An algebra is called supernilpotent if it is $k$-supernilpotent for some $k$.
Note that by \eqref{eq-spercomm} every $k$-supernilpotent algebra from congruence modular variety is $k$-nilpotent.
Moreover supernilpotent algebras from congruence modular variety have very nice characterization which can be easily inferred from
the deep work of R.~Freese and R.~McKenzie \cite{fm}
and K.~Kearnes \cite{kearnes:small-free}, and have been observed in \cite{aichinger-mudrinski}.

\begin{thm}
\label{thm-prime-supernil}
For a finite algebra $\m A$ 
from a congruence modular variety the following conditions are equivalent:
\begin{tenumerate}
\item
$\m A$ is $k$-supernilpotent,
\item
$\m A$ is $k$-nilpotent, decomposes into a direct product of algebras of prime power order
and the term clone of $\m A$ is generated by finitely many operations,
\item
$\m A$ is $k$-nilpotent and all commutator polynomials have rank at most $k$.
\end{tenumerate}
\end{thm}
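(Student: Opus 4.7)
The plan is to synthesize three strands of prior work into a cycle $(1)\Rightarrow(2)\Rightarrow(3)\Rightarrow(1)$, since each individual implication is essentially already in the literature and the theorem's role is to collect them in one place. I would open by pointing to the Aichinger--Mudrinski paper for the overall framework and to Freese--McKenzie and Kearnes for the deeper structure that powers the nontrivial direction.

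For $(1)\Rightarrow(2)$ my plan has three subgoals. First, deduce $k$-nilpotence from $k$-supernilpotence by iterating the inequality \eqref{eq-spercomm}: this yields $1^{(k+1)} \leq [1,\ldots,1]$ with $k{+}1$ entries, which vanishes by assumption. Second, decompose $\m A$ as a direct product of algebras of prime power order by invoking the Freese--McKenzie structure theory for nilpotent congruence modular algebras, which for each prime $p$ dividing $|A|$ produces a congruence $\theta_p$ with $|\m A/\theta_p|$ a power of $p$; then use higher commutator vanishing to verify that these $\theta_p$ pairwise permute and meet to $0_{\m A}$, giving the desired product decomposition. Third, obtain finite generation of the term clone via Kearnes' small-free-spectra theorem, whose hypotheses are delivered by supernilpotence together with the prime-power decomposition.

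For $(2)\Rightarrow(3)$ I would reduce to a single prime power factor (higher commutators behave well with direct products), and then use induction on the nilpotence class together with the congruence--modular commutator calculus and finite generation of the clone to bound the rank of every essential commutator polynomial by $k$. The converse $(3)\Rightarrow(1)$ is the most direct piece: if every commutator polynomial has rank at most $k$, then inspecting the definition of $C(1,\ldots,1;0)$ with $k{+}1$ slots shows the defining quasi-identity is satisfied vacuously, because no polynomial can truly depend on more than $k$ of the $k{+}1$ congruence blocks; hence $[1,\ldots,1]=0_{\m A}$. The main obstacle I anticipate is the second subgoal of $(1)\Rightarrow(2)$, namely the clean construction of the $\theta_p$ and the verification that they permute and meet trivially: the binary commutator alone is not sharp enough for this, and one must invoke \eqref{eq-spercomm} at exactly the right moment to leverage the higher-commutator content of supernilpotence. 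Everything else is either an immediate consequence of \eqref{eq-spercomm} or a pointer into Aichinger--Mudrinski.
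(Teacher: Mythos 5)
The paper gives no proof of this theorem at all: it is imported with the remark that it ``can be easily inferred'' from Freese--McKenzie and Kearnes and ``has been observed'' in Aichinger--Mudrinski, which are exactly the three sources your plan stitches together, so at the level the paper operates your approach coincides with its own. One concrete warning should you actually write this out: your justification of $(3)\Rightarrow(1)$ is not right as stated. The condition $C(1,\ldots,1;0)$ with $k{+}1$ slots is not ``satisfied vacuously'' --- it quantifies over \emph{all} polynomials, which may perfectly well depend on all $k{+}1$ blocks of arguments; having commutator polynomials only of rank $\leq k$ says nothing directly about dependence. The actual argument in the cited literature is a construction: from a hypothetical failure of the $(k{+}1)$-ary centralizing condition one builds, using the difference term and the $k$-nilpotence hypothesis of (3), a commutator (absorbing) polynomial of rank $k{+}1$, contradicting (3). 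So that step is the place where the Freese--McKenzie commutator calculus genuinely does work, not a triviality, even though the implication itself is correct and citable. Your other flagged obstacle --- producing the congruences $\theta_p$ and showing they permute and meet to $0_{\m A}$ --- is indeed where the higher commutator earns its keep, and deferring it to Freese--McKenzie/Kearnes is exactly what the paper does.
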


We will see in the next sections that Theorem \ref{thm-prime-supernil} shows two key properties of supernilpotent algebras: possibility of decomposition into direct product of algebras of prime power order and bounded essential arity of commutator polynomials. The second property can be formulate in a less formal way that for every $k$-supernilpotent algebra there is no possibility to express as a polynomial a function which behave similarly to $k+1$-ary conjunction. 

\section{The structure of supernilpotent algebras}\label{sec-struct}
 In this section we will see that every supernilpotent algebra of prime power order $q^h$ is in fact a wreath product of algebras polynomially equivalent to  simple modules of order $q^\alpha$. In fact, we will see even more, we will prove that every operation of such the algebra can be described by a bunch of polynomials over $\m F_q$ of bounded degree.  Then using this characterization we will able to show some facts needed in the next sections. More detailed investigations of structure of supernilpotent and not only supernilpotent algebras can be find in \cite{ikk:intervals}.

First, we present mentioned earlier decomposition of supernilpotent algebras into wreath product of  algebras polynomially equivalent with simple abelian groups. We will use Freese's and McKenzie's ideas from \cite{fm} developed in more general settings in WanderWerf's PhD thesis \cite{wanderwerf}. In particular for algebras $\m Q=(Q,F^{\m Q})$ and $\m B=(B,F^{\m B})$ of the same type $\mathcal{F}$ such that $\m Q$ is abelian with associated group $(Q,+,-)$ and the set of operation $\m T$ such that for $n$-ary operation $f\in \mathcal{F}$ there is $t_f:B^{n}\mapsto Q$ in \cite{fm} was defined algebra $\m A=\m Q\otimes^{T}\m B$ of type $\mathcal{F}$ with universe $Q\times B$ and operations defined as follow 
\[
f^{\m A}((q_1,b_1),\ldots,(q_n,b_n))=\left(f^{\m Q}(q_1,\ldots,q_n)+t(b_1,\ldots,b_n),f^{\m B}(b_1,\ldots,b_n)\right),
\]
where $f$ is $n$-ary operation from $\mathcal{F}$. Note that since $\m Q$ is an abelian algebra from congruence modular variety and hence affine $f^{\m Q}$ can be expressed  in the form $f^{\m Q}(q_1,\ldots, g_n)=\sum_{i=1}^{n}\lambda_i q_i+c$, where $\lambda_i$'s are endomorphisms of $(Q,+)$. 

Let assume that $\m A$ is supernilpotent algebra of prime power order $q^{h}$ and $\theta\in\con{\m A}$ be one of its atoms (i.e. conqruences covering $0_{\m A}$). Then using results form \cite{fm} it can be shown that $\m A$ can be decomposed into wreath product of $\m A/\theta$ and some algebra $\m Q$ polynomially equivalent to simple module. More precisely $\m A$ is isomorphic to the algebra $\m Q\otimes^{T}\m A/\theta$ for some $T$ and $\m Q$. Note that if $|Q| = q^{\alpha}$ then $\m A/\theta$ has order $q^{h-\alpha}$. Repeating this procedure recursively for $\m A/\theta$ we obtain that $\m A$ is isomorphic to some algebra which is the wreath product of algebras polynomially equivalent to simple modules of order $p^{\alpha_1}. \ldots, p^{\alpha_s}$. From this point we assume that $\m A$ itself is such the algebra. Denote $e_i$ the projection on the $i$-th coordinate of $A$ (for $i=1\ldots s)$. Now enrolling the recursive procedure we get, that every basic operation $f$ of $\m A$ fulfills the following properties
\begin{gather*}
e_s(f(x_1,\ldots,x_n))=\sum_{i=1}^{n}\lambda_i^s e_i(x_i)+t^s_f,\\
\ldots\\
e_j(f(x_1,\ldots,x_n))=\sum_{i=1}^{n}\lambda_i^j e_j(x_i)+t^j_f(e_{j+1}(x_1),\ldots,e_{s}(x_1),\ldots,e_{j+1}( x_n),\ldots,e_{s}(x_n)),
\end{gather*}
for some $\lambda_i^j$'s being endomorphisms of $j$-th module (of order $p^{\alpha_j}$) and some $t_f^j$'s. 
Note that constant summands in above expressions are hidden in $t_f^j$'s and  $t^s_f$ is just the constant. 

We will now translate every polynomial $\po g$ over $\m A$ to system of polynomials over the field $\m F_q$ that will simulate the behaviour of $\po g$, From the above observations about wreath product we see, that every element $a \in A$ can be written as a tuple $a = (e_1 a. \ldots, e_s a)$. Furthermore each $e_i a$ can be identified with a tuple $b_1, \ldots, b_{\alpha_s}$ where each $b_j \in Z_q$ . Indeed, each simple module of size $q^{\alpha}$ has a group underlay of prime exponent, this group must be then isomorphic to group $\m Z_q^{\alpha}$. So each element $a \in A$ can be identified in such a way with tuple $(\pi_1(a), \ldots, \pi_h (a))$ (with $\pi_i(a) \in Z_q$) and without loss of generality we will just write $a = (a_1, \ldots, a_h)$ (as we can replace algebra $\m A$ with isomorphic algebra accordingly) or $a=(a_1, a_2. \ldots, a_{\alpha_i})$ when $a \in e_i A$. 

So now it is clear, that for $i=1\ldots h$ each $\pi_i \po g(x_1, \ldots, x_n)$ is in fact the function from $(Z_q)^{nh} \map Z_q$ so as such can be represented by multivariate polynomial over variables $\pi_1 x_1, \ldots \pi_h x_1,\ldots,\pi_1 x_n, \ldots \pi_h x_n$. So for each $i=1\ldots h$ we have some polynomial $\po p_i$ satisfying $\pi_i \po g(x_1, \ldots, x_n) = \po p_i(\pi_1 x_1, \ldots \pi_h x_1,\ldots,\pi_1 x_n, \ldots \pi_h x_n)$. We know from basic algebra that $\po p_i$ has unique representation up to equations $x^{q} = x$ (for all variables). We will always mean by polynomial representing $\pi_i \po g$ this of smallest total degree up to those equations. We will also write $\deg \pi_i \po g$ for the degree of polynomial representing $\pi_i \po g$. We now want to prove, that such polynomials have small degrees.
 
\begin{lm}
\label{sys-pol}
Let $\m A$ be supernilpotent algebra of prime power order $q^h$ and $\po g$ be $n$-ary polynomial of $\m A$. Let $d_i$ be maximal degree of $\pi_j \po g$ for $\alpha_{1}+\ldots+\alpha_{i-1}< j\leq \alpha_1+\ldots+\alpha_{i-1}+\alpha_{i}$. Then
$$\sum_{i=1}^s \alpha_i \cdot d_i \leq (mq)^{\alpha_1 + \ldots + \alpha_{s-1}} \cdot \alpha_s$$
where $m$ is maximal arity of basic operation in the signature of $\m A$.
\end{lm}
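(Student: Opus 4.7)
The plan is to bound each individual block-degree $d_j$ by a quantity $B_j$ produced by a simple recurrence arising from the wreath-product formulas in the excerpt, and then sum the contributions $\alpha_j B_j$.

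First, I would set up an induction on the term-structure of the polynomial $\po g$. The base cases are easy: if $\po g = x_l$ is a variable then $\pi_k \po g$ is literally the variable $\pi_k x_l$ of the polynomial ring over $\m F_q$, so $d_j \leq 1$; if $\po g$ is a constant then $d_j = 0$. The inductive case is $\po g = f(\po g_1, \ldots, \po g_n)$ with $n \leq m$. Substituting the subpolynomials into the displayed wreath-product formulas yields
\[ e_s(\po g) = \sum_i \lambda_i^s \, e_s(\po g_i) + t^s_f, \qquad e_j(\po g) = \sum_i \lambda_i^j \, e_j(\po g_i) + t^j_f\bigl(e_{j+1}(\po g_1),\dots, e_s(\po g_n)\bigr). \]

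Second, the block $j=s$ is clean: the $\lambda_i^s$ are linear endomorphisms and $t^s_f$ is a constant, so $d_s(\po g) \leq \max_i d_s(\po g_i)$ and the induction gives $d_s \leq 1$ throughout. For $j < s$ the serious work is to control the arbitrary function $t^j_f \colon \m F_q^{n(\alpha_{j+1}+\ldots+\alpha_s)} \to \m F_q^{\alpha_j}$. Its unique reduced polynomial representation has each input-variable exponent at most $q-1$; substituting each input variable by the corresponding $\pi_k \po g_i$, whose degree is at most the inductive bound $B_k$, shows that every monomial of $t^j_f$ has substituted total degree at most $(q-1) n \sum_{k=j+1}^s \alpha_k B_k$. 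Using $n \leq m$, this yields the recurrence
\[ B_s = 1, \qquad B_j = m(q-1) \sum_{k=j+1}^s \alpha_k B_k, \]
which then propagates $d_j(\po g) \leq B_j$ through the induction (reducing a substituted polynomial modulo $x^q = x$ can only decrease the total degree, so the bound survives).

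Third, to pass from individual $B_j$'s to the announced sum, I introduce the telescoping quantity $T_j = \sum_{k=j}^s \alpha_k B_k$. A short algebraic rearrangement gives $T_j = (m(q-1)\alpha_j + 1)\, T_{j+1}$ with $T_s = \alpha_s$, whence
\[ \sum_{i=1}^s \alpha_i d_i \leq T_1 = \alpha_s \prod_{j=1}^{s-1} \bigl(m(q-1)\alpha_j + 1\bigr). \]
To finish, I verify the elementary inequality $m(q-1)\alpha + 1 \leq (mq)^{\alpha}$ for $m\geq 1$, $q \geq 2$, $\alpha \geq 1$: the case $\alpha = 1$ is $mq - m + 1 \leq mq$, and one induction step using $mq \geq 2$ suffices. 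Applying this factor-by-factor delivers the required bound $(mq)^{\alpha_1 + \ldots + \alpha_{s-1}} \alpha_s$.

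The main conceptual obstacle is the arbitrary function $t^j_f$: naively its polynomial representation has total degree up to $(q-1) n (\alpha_{j+1}+\ldots+\alpha_s)$, which, multiplied by $\max_k B_k$, would produce a useless blow-up. The saving observation is that each input variable's exponent is still at most $q-1$, so a monomial's substituted degree is $\sum e_{i,k,l} B_k$ rather than $(\text{total degree}) \cdot \max_k B_k$; maximizing over the bounded exponents gives the multiplicative recurrence on $B_j$ that is tight enough to match the wreath-product hierarchy and yield the stated bound.
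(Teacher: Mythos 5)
Your proposal is correct and follows essentially the same route as the paper: both bound the block degrees via the recurrence $d_j \leq m(q-1)\sum_{k>j}\alpha_k d_k$ coming from the wreath-product form of the basic operations (using that each variable of $t_f^j$ appears with exponent at most $q-1$), then telescope $\sum_k \alpha_k d_k$ with the factor $m(q-1)\alpha_j+1 \leq (mq)^{\alpha_j}$. Your version is somewhat more explicit about the structural induction and the telescoping quantity $T_j$, but the key ideas coincide with the paper's proof.
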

\begin{proof}
We will inductively decrease $j=s \ldots 1$ and consider coordinates of $e_j A$ (there is $\alpha_j$ of them) to obtain degree of $\pi_j \po g$ for $\alpha_{1}+\ldots+\alpha_{i-1}< j\leq \alpha_1+\ldots+\alpha_{i-1}+\alpha_{i}$. Observe, that from the form of any basic operation of $\m A$ that we unrolled from wreath product representation  we can get (by simple induction) that for any $n$-ary polynomial $\po g$ its $j$-th coordinate $e_j \po g$ can be written as sum of elements of one of the forms:
\begin{itemize}
    \item $\lambda e_j x_i$, where $x_i$ is variable and $\lambda$ is some endomorphism of module corresponding to $e_j \m A$,
    \item $t_f^j (e_{j+1}\po g^{(1)},\ldots,  e_{s}\po  g^{(1)} \ldots, \po e_{j+1}\po g^{(l)}\ldots, e_{s} \po g^{(l)})$, where $t_f^j$ comes from $l$-ary basic operation $f$ of the algebra $\m A$ and $\po g^{(i)}$ are other polynomials of $\m A$,
    \item constant,
\end{itemize}
and for $j=s$ we do not have the second type of the above summands. To start with take $j=s$. $e_s A$ is then underlying set of a module of size $q^{\alpha_s}$ so it has $\alpha_s$ coordinates. We want to bound degree of polynomial representing $e_s f$ projected to each such coordinate. Notice that $\lambda e_s x_i$ is essentially an unary function, that depends only on projections of $x_i$ to $\alpha_s$ coordinates. Moreover on each coordinate it must be a linear function, because $\lambda$ is endomorphism of abelian group of exponent q. It means that on each coordinate it can be represented by polynomial of degree at most $1$. So we get that $d_s \leq 1$ (because degree of sum of polynomials is at most maximal degree of the summand and adding constants does not affect our upper bound).

In case $j<s$ we again bound degrees of polynomials for $\lambda e_j x_i$ by $1$ and we are left with the summands of the form $t_f^j (e_{j+1}\po g^{(1)},\ldots,e_{s}\po g^{(1)},\ldots,e_{j+1}\po g^{(l)},\ldots, e_{s}\po g^{(l)})$, where $l$ is arity of basic operation $f$. 
For $u>j$ each $e_{u}\po g^{(v)}$ can be represented by $\alpha_u$ polynomials of degree at most $d_u$. As every projection of $t_f^j$ itself can be represented as polynomial whose each of variable appears with degree at most $q-1$, so $t_f^j (e_{j+1}\po g^{(1)},\ldots,e_{s}\po g^{(1)},\ldots,e_{j+1}\po g^{(l)},\ldots, e_{s} \po g^{(l)})$ projected to any of its $\alpha_j$ coordinates can be represented by polynomial of degree 

$$d_j \leq l\cdot (q-1) \cdot \sum_{i=j+1}^s \alpha_i d_i$$
Since it works for any $j$ we have that:

$$\sum_{i=1}^s \alpha_i d_i = \alpha_1 d_1 + \sum_{i=2}^s \alpha_i d_i  \leq 
\alpha_1 \cdot l\cdot (q-1) \cdot \sum_{i=2}^s \alpha_i d_i +  \sum_{i=2}^s \alpha_i d_i = 
((q-1)l\alpha_1 + 1)(\sum_{i=2}^s \alpha_i d_i)$$ 
As $(q-1)l\alpha_1 + 1 \leq (ql)^{\alpha_1}$ we get

$$\sum_{i=1}^s \alpha_i d_i \leq (ql)^{\alpha_1} \cdot (\sum_{i=2}^s \alpha_i d_i),$$
and applying the same reasoning recursively to $\sum_{i=j}^s \alpha_i d_i$ for $j=2,3,\ldots,s$ we will end up

$$\sum_{i=1}^s \alpha_i d_i \leq (ql)^{\alpha_1} (ql)^{\alpha_2} \cdot (ql)^{\alpha_{s-1}} \alpha_s d_s= (ql)^{\alpha_1 + \ldots + \alpha_{s-1}} \cdot \alpha_s$$
what we wanted to prove.

\end{proof}

Lemma  \ref{sys-pol} shows in fact, how to reduce solving equations over supernilpotent algebra $\m A$ of prime power order $q^h$ to system of $h$ equations over field $\m F_q$.  Now, we would like to reduce solving equations over $\m A$ to solving one equations of the from $\po p(\overline{x}) = 1$, where $\po p$ is bounded degree polynomial over field $\m F_q$.  Moreover, the lemma shows that there is easy to compute one to one mapping between solutions of new equation and the original one. 

\begin{lm}\label{lm-eqA-to-eqF}
Let $\m  A$ be supernilpotent algebra of prime power order $q^h$. Then for $n$-ary $\po p$ and $\po g$ polynomials over $\m A$ there exists $nh$-ary polynomial $\po f$ over $\m F_q$ of degree at most $|A|^{\log_q m + 1}$ such that $f(F^{hn}_q)\subseteq\set{0,1}$ and for $\o a\in A^n $
$$\po p(a_1, \ldots, a_n) = \po g(a_1, \ldots, a_n)$$ 
iff
$$\po f(\pi_1 a_1, \ldots,\pi_h a_1,\ldots,\pi_1 a_n,\ldots,\pi_h x_n) = 1.$$
\end{lm}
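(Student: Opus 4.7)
The plan is to construct $\po f$ as a product of indicator polynomials, one for each of the $h$ coordinates of $\m A$. First, I would apply Lemma~\ref{sys-pol} to both $\po p$ and $\po g$: for every $j=1,\ldots,h$ there is a polynomial $\po p_j$ (respectively $\po g_j$) over $\m F_q$ in the $nh$ variables $\pi_1 x_1,\ldots,\pi_h x_n$ whose evaluation agrees with $\pi_j \po p$ (respectively $\pi_j \po g$). Observe that $\po p(\o a) = \po g(\o a)$ holds iff the equalities $\po p_j = \po g_j$ hold on the coordinates of $\o a$ for every $j$, so the original equation over $\m A$ is equivalent to a conjunction of $h$ polynomial equations over $\m F_q$.

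Next, I would turn this conjunction into a single $\{0,1\}$-valued polynomial via Fermat's little theorem. Since $a^{q-1}=1$ for $a\neq 0$ and $0^{q-1}=0$ in $\m F_q$, the polynomial
\[
\po f(\o y) \;=\; \prod_{j=1}^{h} \bigl(1 - (\po p_j(\o y) - \po g_j(\o y))^{q-1}\bigr)
\]
takes only the values $0$ and $1$ on $F_q^{nh}$, equals $1$ precisely when $\po p_j(\o y) = \po g_j(\o y)$ for every $j$, and therefore witnesses the claimed equivalence. This settles the functional content of the lemma; the remaining task, and the main technical obstacle, is to verify that the total degree of $\po f$ fits inside $|A|^{\log_q m + 1}$.

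To bound $\deg \po f$, I would use $\deg(\po p_j - \po g_j) \le \max(\deg \po p_j, \deg \po g_j)$ and group the $h$ coordinates according to the wreath-product decomposition of $\m A$, obtaining
\[
\deg \po f \;\le\; (q-1)\sum_{j=1}^{h}\max(\deg \po p_j,\deg \po g_j) \;\le\; (q-1)\sum_{i=1}^{s}\alpha_i\bigl(d_i^{p} + d_i^{g}\bigr),
\]
where $d_i^{p}, d_i^{g}$ are the maxima of $\deg \po p_j, \deg \po g_j$ over $j$ in the $i$-th group. By Lemma~\ref{sys-pol} applied to $\po p$ and $\po g$ separately, each of the sums $\sum_i \alpha_i d_i^{p}$ and $\sum_i \alpha_i d_i^{g}$ is at most $(mq)^{\alpha_1+\ldots+\alpha_{s-1}}\alpha_s$, so
\[
\deg \po f \;\le\; 2(q-1)\alpha_s (mq)^{\alpha_1+\ldots+\alpha_{s-1}}.
\]
To finish, I would use the elementary inequality $2(q-1)\alpha_s \le (mq)^{\alpha_s}$ (which holds since $(mq)^{\alpha_s} \ge (2q)\cdot 2^{\alpha_s-1} \ge 2q\alpha_s$), yielding $\deg \po f \le (mq)^{\alpha_1+\ldots+\alpha_s} = (mq)^{h} = |A|^{\log_q m + 1}$, exactly the bound demanded in the statement. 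The whole argument therefore reduces to careful bookkeeping of the degree accumulation from Lemma~\ref{sys-pol}, and the tiny combinatorial inequality in the last step is the only place where the specific form of the exponent $\log_q m + 1$ is used.
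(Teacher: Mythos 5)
Your construction is essentially the paper's own proof: the paper likewise splits the equation into $h$ coordinate equations over $\m F_q$ and sets $\po f(\o x)=\prod_{i=1}^{h}\bigl(1-\po p_i(\o x)^{q-1}\bigr)$ with $\po p_i$ already denoting the difference of the two sides, then bounds $\deg\po f$ by $(q-1)\sum_i\alpha_i d_i$ via Lemma~\ref{sys-pol}. The only divergence is your degree bookkeeping, which pays an extra factor of $2$ by using $d_i^{p}+d_i^{g}$ instead of a single uniform bound $d_i$ valid for both polynomials (as the recursion in Lemma~\ref{sys-pol} in fact provides); absorbing that factor via $2(q-1)\alpha_s\leq(mq)^{\alpha_s}$ is fine, noting only that it tacitly uses $m\geq 2$, which holds for any nontrivial algebra here since a Mal'cev term forces a basic operation of arity at least $2$.
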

\begin{proof}
Let 
\begin{equation}\label{eq-reducing}
\po p(x_1,\ldots,x_n)=\po g(x_1,\ldots,x_n)
\end{equation}
be an equation over $\m A$. Note that every polynomial of $\m A$ projected by every $\pi_i$ can be represented by polynomial over field $\m F_q$. So naturally we can write our equations equivalently as system of $h$ polynomial equations:
 
\begin{equation}\label{eq-system}
\begin{cases}
 \po p_1(\pi_1 x_1, \ldots, \pi_h x_n)=0\\
 \po p_2(\pi_1 x_1, \ldots, \pi_h x_n)=0\\
 \ldots\\
 \po p_h(\pi_1 x_1, \ldots, \pi_h x_n)=0
\end{cases}
\end{equation}

 It is easy to see that function defined as follows
\begin{equation}\label{eq-prod}
\po f(\o x)=\prod_{i=1}^{h}(1-\po p_i(\o x)^{q-1})
\end{equation}
fulfills conditions of the Lemma. It left to count degree of $\po f$. As $\alpha_j$ of those polynomials have degree bounded by $d_j$ for $j=1 \ldots s$ we get that degree of $\po f$ is bounded by $(q-1)(\sum_{i=1}^s \alpha_i d_i)$ So by lemma \ref{sys-pol} as $q^{\alpha_1 + \ldots + \alpha_n} = |A|$ this is bounded by 
$$(q-1)\cdot(mq)^{\alpha_1 + \ldots + \alpha_{s-1}} \cdot \alpha_s \leq (mq)^{\alpha_1 + \ldots, + \alpha_s} = |A|^{\log_q m + 1} $$
\end{proof}

\section{Behavior of polynomials over finite fields}\label{sec-zippel}
This section contains proof of Lemma \ref{lm-zippel}. The main idea of the proof is to show that given polynomial over a finite field can be transformed into some special polynomial of known degree. The way we do this transformation allow us to establish the lower bound of the given polynomial's degree depending among other on the inverse image of chosen element of the field. Hence, by elementary calculations we obtain that the statement of the lemma holds.  

Let $\po f$ be a $n$-ary polynomial over field $\m F_q$ for some prime $q$. We will prove that for every $y\in \po f(F_q)$ we have that $|\po f^{-1}(y)|>q^{n-\deg \po f - q\log_2{q}}$.  Since for a constant polynomial this is obviously true, we assume that $\po f$ is not constant. Fix  $y\in \po f(F_q)$. We will construct the sequence of at most $n$ polynomials of decreasing arity such that:
\begin{itemize}
 \item $\po f_0=\po f$,
 \item arity of $\po f_i$ is $n-i$,
 \item $\frac{|\po f_i^{-1}(y)|}{c}\geq|\po f_{i+1}^{-1}(y)|>0$, where $c\in\set{2,q}$,
 \item polynomial $\po f_{i+1}$ is obtained by substituting some variable in $\po f_i$ by constant or linear combination of other variables,
 \item if $\po f_l$ is the last polynomial in the sequence then either $|\po f_{l}^{-1}(y)|=1$ or $\po f_l$ is a polynomial in one variable.
\end{itemize}

We start with defining the sequence $\set{\po f_i}_{i=0}^{l}$. Let $\po f_0=\po f$. If arity of $\po f_i$ is higher then $1$ and $|\po f_i^{-1}(y)|>1$ then we define $\po f_{i+1}$ in one of two ways depending on the size of $|\po f_i^{-1}(y)|>1$. If $1<|\po f_i^{-1}(y)|<q^q$ then there exists $\o a, \o b\in \po f_i^{-1}(y)$ such that 
$\o a \neq \o b$. Since $\o a$ and $\o b$ are not equal we can choose $j$ such that $a_j\neq b_j$. Without loss of generality assume $j = n-i$.  Now we obtain $\po f_{i+1}$ from $\po f_i$ by substituting variable $x_{n-i}$ by some constant $c \in Z_q$. We choose value $c$ to minimize $|\po f_{i+1}^{-1}(y)|$, but to keep $|\po f_{i+1}^{-1}(y)| > 0$. Note that as there are at least two possible values for $c$ preserving $|\po f_{i+1}^{-1}(y)| > 0$, namely $a_{n-1}$ and $b_{n-1}$ so $1\leq|\po f_{i+1}^{-1}(y)|\leq\frac{|\po f_i^{-1}(y)|}{2}$. Moreover, it is easy to see that $\deg \po f_i\geq \deg \po f_{i+1}$.

Case $|\po f_i^{-1}(y)|\geq q^q$ is a bit more complicated since we want to reduce the size of $\po f^{-1}(y)$ faster than in the previous case. As $|\po f_i^{-1}(y)|\geq q^q$ we can find $q$ elements of $\po f_i^{-1}(y)$, say $v^1$, $v^2$,\ldots, $v^{q}$, which treated as a vectors over field $\m F_q$ are linearly independent. Hence there exists $(0,\ldots,0)\not=(\beta_1,\ldots,\beta_{n-i})\in F_q^{n-i}$ such that for every $a\in F_q$  there exists $k$ such that
\[
\sum_{j=1}^{n-i}\beta_j\cdot v^k_j=a.
\]
Since, $v^j$'s are taken from $\po f_i^{-1}(y)$ it follows that for every $a\in F_q$ the system of equations 
\[
\begin{cases}
\po f_i(\o x)=y\\
 \sum_{j=1}^{n-i}\beta_j\cdot x_j=a
\end{cases}
\]
has a solution. Denote the set of solutions of system of equations in such the form as $S_a$. Let $u$ be such that $\beta_u\not=0$. Assume without loss of generality, that $u=n-i$. We choose $b\in F_q$ which minimize the size of set $S_b$ and produce $\po f_{i+1}$ by substituting in $\po f_i$ variable $x_{n-i}$ with 
\[
\beta_{n-i}^{-1}(b-\sum_{j=1}^{n-i-1}\beta_j\cdot x_j).
\]
Note that $\sum_{a\in F_q}|S_a|=|\po f_i^{-1}(y)|$ and hence $|S_b|\leq \frac{|\po f_i^{-1}(y)|}{q}$.  Thus, $|\po f_{i+1}^{-1}(y)|\leq \frac{|\po f_i^{-1}(y)|}{ q}$.
Besides, $\deg \po f_i\geq \deg \po f_{i+1}$. It is easy to see that sequence of polynomials constructed in presented way fulfills required conditions. 

Now, we will prove that $\deg \po f\geq n-l$. There are two cases: $\po f$ is a polynomial in one variable and $|\po f_l^{-1}(y)|=1$. If $\po f$ is an univariate polynomial then $n-l=1$ and since $\po f$ is not a constant polynomial $\deg f\geq n-l$. The case when $|\po f_l^{-1}(y)|=1$ is a bit more complicated. Notice, that there is exactly one tuple  $\o a=(a_1,\ldots,a_{n-l})\in F_q^{n-l}$ such that $\po f_l(\o a)=y$. Let 
\[
 \po f'(x_1,\ldots, x_{n-l})=1-(\po f_l(x_1+a_1,\ldots, x_{n-l}+a_{n-l})-y)^{q-1}.
\]
One can easily check that $\po f'(\o x)=1$ iff $x=(0,\ldots,0)$ and otherwise it is equal zero. Obviously $\deg \po f'\leq (q-1)\deg \po f_l$. On the other hand we can express $\po f'$ in the following way: 
\[
\po f'(x_1,\ldots,x_{n-l})=\prod_{i=1}^{n-l}(1-x_i^{q-1}).
\]
Above polynomial has degree $(q-1)\cdot(n-l)$. This is the lowest possible degree as every polynomial over field $\m F_q$ has unique representation as sum of monomials modulo identities in the form $x_i^q=x_i$. Hence, $(q-1)\deg \po f_l\geq \deg \po f'\geq (q-1)(n-l)$ and in a consequence $\deg \po f\geq \deg \po f_l\geq n-l$.  

Now, we are ready to do the final calculations. Denote $K=|\po f^{-1}(y)|$. Let $l_1$ be the number of $\po f_i$'s obtained by substituting one of variables in $\po f_{i-1}$ by a constant, and $l_2=l-l_1$ i.e the number of $\po f_i$'s we get substituting one of the variables of $\po f_{i-1}$ by linear combination of other variables. It is easy to see that $l_1\leq \log_2{q^q}=q\log_2{q}$ and $l_2\leq \log_q{K}$. Summarizing 
\[
 \deg \po f\geq n-l=n-l_1-l_2\geq n-q\log_2{q}-\log_q{K}.
\]
Hence,
\[
 q^{\deg \po f}\geq q^{n-q\log_2{q}-\log_q{K}}
\]
and finally
\[
|\po f^{-1}(y)|=K=q^{\log_q{K}}\geq q^{n-\deg\po f-q\log_2{q}}.
\]
which finishes the proof of the lemma.
\section{Deterministic algorithm}\label{sec-determ} 
In this Section we prove Theory \ref{thm-determ-alg}. Let $\m A$ be a fixed supernilpotent algebra of prime power order $q^h$ and 
\begin{equation}\label{eq-determ}
\po p(\o x)= \po g(\o x)
\end{equation}
be a given equation over $\m A$. By Lemma \ref{lm-eqA-to-eqF} there exists polynomial $\po f$ over $\m F_q$ of degree $d=|A|^{\log_q{m}+1}$ and arity $hn$, where $m$ is bound on arity of basic operation of $\m A$, such that $\po f(F_q^{hn})\subseteq\set{0,1}$ and an equation 
\begin{equation}\label{eq-f}
\po f(x_1^1,\ldots,x_1^h,\ldots,x_n^1,\ldots,x_n^h)=1
\end{equation}
has a solution iff equation \eqref{eq-determ} has a solution. We have even more, $\o a\in A^{n}$  is a solution of equation \eqref{eq-determ} iff $\po f(\pi_1(a_1),\ldots, \pi_h(a_1),\ldots,\pi_1(a_n),\ldots,\pi_h(a_n))=1$. 
Thus, it is enough to show the algorithm solving equation $\po f(\o x)=1$. 

Our algorithm treats circuit as a black-box and checks the set $S_{n, h}\in F_q^{nh}$ of potential solutions of polynomial size in $n$ with such the property that if  equation \eqref{eq-f} has a solution it has solution contained in $S_{nh}$. The algorithm returns "yes" if it finds the solution in the hitting set, and "no" otherwise. In the next paragraph we will show that such the set $S_{nh}$ exists for every $n$ and it can be compute in polynomial time. If $\po f$ is a constant function  then the algorithm obviously returns proper answer for every non-empty set of potential solutions as a hitting set. Hence, we can assume that $f$ is not a constant function. 

As every polynomial over $\m F_q$ also polynomial $\po f$ can be presented as a sum of pairwise different monomials multiplied by nonzero constants  from the field. Let $t$ be a monomial taken from this presentation which contains the biggest number of different variables. From the fact that degree of $\po f$ is bounded by $d$ we have that $t$ depends on at most $d$ variables. Now, let consider the polynomial $\po f'$ formed by substituting variables not contained in  $t$ by $0\in F_q$. Note that $\po f'$ is not syntactically equal any constant and hence it is not a constant function as every polynomial function over finite filed has unique representation (modulo equations $x^q = x$ for variables). Therefore, there exists solution to the equation $\po f'(\o x)=1$. Such the solution corresponds to the solution of equation \eqref{eq-f} in which at most $d$ variables a not equal $0$. Hence, we obtain that equation \eqref{eq-f} has a solution if it has a solution in which at most $d$ variables are not equal $0$. There are $O((q^hn)^d)=O(n^d)$ valuations of variables in which at most $d$ variables are different than $0$. Thus, to check if equation \eqref{eq-f} has a solution it is enough to check $O(n^d)$ potential solutions and it can be done in time $O(n^dk)$, where $k$ is a size of circuit on the input. 

\section{Randomized algorithm}\label{sec-rand}
In this section we will prove Theorem \ref{thm-rand}  which says that there exists linear time Monte Carlo algorithm solving \csat{} for supernilpotent algebras. More precisely, we will prove that if there exists solution to the equations over fixed supernilpotent algebra of prime power order then checking random assignments of variables with uniform distribution we will find the solution with probability at least $c$ for some $c>0$. 

Let 
\[
\po p(x_1,\ldots,x_n)=\po g(x_1,\ldots,x_n)
\]
be a given equation over supernilpotnent algebra $\m A$ of prime power order $q^h$. By Lemma \ref{lm-eqA-to-eqF} we get function $\po f$ which is $h n$-ary polynomial over  $\m F_q$ such that $\o a\in A$ is a solution to above equation  iff $\po f(\pi_1 a_1,\ldots,\pi_h a_1,\ldots,\pi_1,a_n,\ldots  ,\pi_h a_n )=1$. Moreover, the degree of $\po f$  is bounded by constant $d$ which depend only on $\m A$.

Now, by Lemma \ref{lm-zippel} as $\po f$ is $nh$-ary we obtain that $|\po f^{-1}(1)|\geq q^{nh-\deg \po f-q\log_{2}{q}}\geq q^{nh-d-q\log_{2}{q}}$. Observe that $\frac{|\po f^{-1}(1)|}{|A|^n}$ the fraction of assignments of variables for which $\po f$ is equal $1$ is at least $c=\frac{q^{nh-d-g\log_{2}{q}}}{q^{nh}}=q^{-d-q\log_2{q}}$. This bound does not depend on $\po f$ and $n$. Hence, linear time randomized algorithm which picks the assignments of variables with uniform distribution and check if picked assignments is a solution to the equation is a  $c$-correct true-biased Monte Carlo algorithm solving $\csat{\m A}$.

\section{Conclusions}\label{sec-conc}

The main idea of presented in this paper deterministic black-box algorithm  correctness proof is translating polynomials of nilpotent algebra $\m A$ of prime power order to polynomial over $\m F_q$ of small degree $d\leq |A|^{\log_q m +1}$. This allowed us to create the hitting sets for $\csat{\m A}$ by translating hitting sets  for  bounded degree polynomial equations over $\m F_q$. It is worth to emphasize that this reasoning works for $\textit{any}$ hitting set. This means that any black-box algorithm for polynomials over $\m F_q$ of degree at most $d$ translates to an algorithm solving equations over supernilpotent algebras of prime power order. As each variable from $\m A$ (in the reduction from $\csat{\m A}$ to polynomial equations) is factored to at most $\log(|A|)$ variables, the reduction does not affect the time complexity too much. If for instance we have some black-box algorithm for polynomial equation with hitting set of size $O(n^c)$, the same upper bound holds for $\csat{\m A}$.

On the other hand it's easy to prove the dual theorem. For any polynomial equation over $\m F_q$ of degree at most  $d=\frac{|A|^{\log_q{m}}}{m}$ there is nilpotent algebra $\m A$ of size $q^h$ and maximal arity of basic operation $m$ such that any black-box algorithm for the algebra $\m A$ translates to black box algorithm for solving equations over $\m F_q$ of degree at most $d$. To see it, we will consider the following example.

\begin{ex}\label{ex-nil-to-poly}
Let $\m A[h,m]=(A_h,+,p_1,\ldots, p_{h-1})$ be an algebra such that:
\begin{itemize}
    \item $(A_h,+)=\m Z_q^h$,
    \item $\pi_{i} p_i(x_1, \ldots, x_m) = \prod_{j=1}^k \pi_{i+1}  x_j$
    \item $\pi_j p_i(x_1, \ldots, x_m) = 0$ for $j\neq i$
\end{itemize}
\end{ex}

Note that by results of \cite{fm} algebra $\m A$ from Example \ref{ex-nil-to-poly} is supernilpotent and belongs to congruence modular variety. It easy to see that every equation between polynomials over $\m F_q$ of degree bounded by $d=m^{h-1}=\frac{m^{\log_q{|A|}}}{m}=\frac{|A|^{\log_q{m}}}{m}$ can bee easily translate into equation over $\m A$.  Moreover, projections on the first coordinate of element of any hitting set for $\csat{\m A}$ is a hitting set for solving equations of polynomials over $\m F_q$ of degree bounded by $d$.

In the light of above paragraphs, to obtain efficient black-box algorithm solving \csat{} over supernilpotent algebras it's enough to produce black-box algorithm for solving bounded degree equations for polynomials over fields and translate it to black-box algorithm for supernilpotent algebras since any other black-box algorithm for supernilpotent algebras cannot be much more efficient (in terms of size of the algebra and maximal arity of operation). So it seems that the right approach to find asymptotically optimal deterministic algorithm for supenilpotent algebras is to find optimal algorithm for polynomials of bounded degree.

There is a big disproportion between computational complexity of deterministic and probabilistic algorithms presented in this paper. Hence, it would not be surprising if there was an effective derandomization of our Monte Carlo algorithm which would result in new fast deterministic algorithm solving \csat{}.  What is also worth noting is the fact, 
that there is one probabilistic algorithm for all supernilpotent algebras that is probabilistic $\textsc{FPT}$ in terms of the algebra signature. It is nontrivial result, because if we were allowed to present the signature of supernilpotent algebra on the input, such a problem would be \npc (to prove it, we can use construction of the algebra showed in Example \ref{ex-nil-to-poly} to encode $q$-coloring).

\end{document}